 \newtheorem{thm}{Theorem}[section]
 \newtheorem{lem}[thm]{Lemma}
 \theoremstyle{definition}
 \theoremstyle{remark}
 \newtheorem*{ex}{Example}
 \numberwithin{equation}{section}
\begin{document}

%
%
%
%
%
%
%
%
%

\title[A Condition To Non-Existence Of Incompressible Velocity Fields.]
 {A Condition On Spherical Surfaces To Non-Existence Of Incompressible Velocity Fields.}

\author[Manuel Garc\'ia-Casado]{Manuel Garc\'ia-Casado}

\address{%
ICP-CSIC\\
Marie Curie, 2\\
Cantoblanco (Madrid) E-28049\\
Spain}

\email{mgac@icp.csic.es}

\keywords{Spherical Surface Area, Isoperimetric Inequality, Navier--Stokes Equations}


\begin{abstract}
In an incompressible velocity field, the surface area of a volume varies with time, but volume remains unchanged. If incidentally the surface becomes spherical along time, the area reaches a local minimum, since sphere has the least area that surrounds a volume. So the area is a function of time that is locally convex at this point. When applied to an incompressible Navier--Stokes fluid, this property is used to compute an inequality that suggest a criterion to non-existence of initial configurations of velocity fields, revealing its impossibility to evolve with time. Three velocity fields are proposed as examples. One of them agrees the inequality, the other two violates it.
\end{abstract}

\maketitle
\section{Introduction}
\label{intro}
Every dynamical system, described by differential equations, deals with the initial value problem. This is, given an initial condition, one tries to determine whether the system can evolve with time or not beginning from that condition. Sometimes, it may be possible to determine whether there are one or several solutions for equations with the initial condition. There was proposed in \cite{RefJ1} that initial conditions for incompressible Navier-Stokes velocity fields are useful to find its time evolution, in such a way that given suitable restrictions to the initial velocity field, the system is determined at least for any finite time after. In the same way, Beale, Kato and Majda \cite{RefJ2} proved that a smooth velocity field may lose its regularity some time after, in such a way that the maximum vorticity becomes unbounded. Hence, to find properties of the initial velocity field is a challenge. In this paper we propose a criterion to the non-existence of some of these velocity fields.
\section{Transport theorem for surfaces}
Reylods transport theorem \cite{RefB3} is a very useful tool since it lets introduce the time derivative of a volume integral inside the integrand of a static volume integral. We would like to do the same with a surface integral. It is, to transform the time derivative of a surface integral, which surface is moving and changing its shape, and to obtain a fixed surface integral with a time derivative inside its integrand. The next theorem shows how to find this issue (see \cite{RefB4}).
\begin{thm}
Let $\vec{u}\left(t,\vec{x}\right)\in \mathbf{R}^{3}$ be a velocity field with components $u_{i}$ that are enough smooth, and let be $f\left(t,\vec{x}\right)\in\mathbf{R}$ also a smooth function. Let $\Omega\subset\mathbf{R}^{3}$ be a region of the field with boundary $\partial\Omega$. The unitary normal vector to $\partial\Omega$ is $\vec{n}$,with components $n_{i}$. Then,
\begin{equation}
\frac{d}{dt}\int_{\partial\Omega\left(t\right)}f d^{2}x=\int_{\partial\Omega}\left[\partial_{t}f +u_{i}\partial_{i}f+\left(\partial_{i}u_{i}-\epsilon_{ij}n_{i}n_{j}\right)f\right]d^{2}x,
\label{eq theorem 1 001} 
\end{equation}
where $\epsilon_{ij}=\frac{1}{2}\left(\partial_{i}u_{j}+\partial_{j}u_{i}\right)$ is the infinitesimal strain tensor (defined by, e. g., \cite{RefB5}).
\end{thm}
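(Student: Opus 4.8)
The plan is to reduce the moving surface integral to an integral over a fixed parameter domain by introducing material (Lagrangian) coordinates, differentiate under the integral sign, and then identify the two contributions separately: the material rate of change of $f$ and the stretching rate of the surface element.

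First I would label the fluid particles on $\partial\Omega$ by parameters $\left(\xi_{1},\xi_{2}\right)$ carried along with the flow, so that the current position is $\vec{x}=\vec{X}\left(t,\xi_{1},\xi_{2}\right)$ with $\partial_{t}\vec{X}=\vec{u}$. Writing the tangent vectors $\vec{a}=\partial_{\xi_{1}}\vec{X}$ and $\vec{b}=\partial_{\xi_{2}}\vec{X}$, the surface element is $d^{2}x=J\,d\xi_{1}d\xi_{2}$ with $J=\left|\vec{a}\times\vec{b}\right|$ and $\vec{n}=\left(\vec{a}\times\vec{b}\right)/J$. Since the parameter domain $D$ is now fixed in time, I can pass the time derivative inside:
\begin{equation}
\frac{d}{dt}\int_{\partial\Omega\left(t\right)}f\,d^{2}x=\int_{D}\left[\frac{Df}{Dt}\,J+f\,\partial_{t}J\right]d\xi_{1}d\xi_{2},
\end{equation}
where $Df/Dt=\partial_{t}f+u_{i}\partial_{i}f$ is the material derivative obtained by the chain rule from $f\left(t,\vec{X}\left(t,\xi\right)\right)$. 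This immediately produces the first two terms of (\ref{eq theorem 1 001}).

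The main step is computing $\partial_{t}J$. Differentiating $\partial_{t}\vec{X}=\vec{u}\left(t,\vec{X}\right)$ with respect to $\xi_{1}$ gives $\partial_{t}\vec{a}=\mathbf{L}\vec{a}$, and likewise $\partial_{t}\vec{b}=\mathbf{L}\vec{b}$, where $\mathbf{L}$ is the velocity gradient with entries $L_{ij}=\partial_{j}u_{i}$. With $\vec{c}=\vec{a}\times\vec{b}=J\vec{n}$ one has $\partial_{t}J=\vec{n}\cdot\partial_{t}\vec{c}$, and I would invoke the algebraic identity
\begin{equation}
\left(\mathbf{L}\vec{a}\right)\times\vec{b}+\vec{a}\times\left(\mathbf{L}\vec{b}\right)=\left(\left(\mathrm{tr}\,\mathbf{L}\right)\mathbf{I}-\mathbf{L}^{\mathsf{T}}\right)\left(\vec{a}\times\vec{b}\right),
\end{equation}
valid for any $3\times3$ matrix (it is the infinitesimal form of the cofactor/Nanson relation). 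Dotting with $\vec{n}$ and using $\vec{c}=J\vec{n}$ yields $\partial_{t}J=J\left(\mathrm{tr}\,\mathbf{L}-\vec{n}\cdot\mathbf{L}^{\mathsf{T}}\vec{n}\right)$, that is $J^{-1}\partial_{t}J=\partial_{i}u_{i}-n_{i}n_{j}\partial_{i}u_{j}$. Because $n_{i}n_{j}$ is symmetric, the antisymmetric part of $\partial_{i}u_{j}$ drops out and $n_{i}n_{j}\partial_{i}u_{j}=\epsilon_{ij}n_{i}n_{j}$, exactly the stretching coefficient appearing in (\ref{eq theorem 1 001}).

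Finally I would reassemble the integrand, substituting $J^{-1}\partial_{t}J=\partial_{i}u_{i}-\epsilon_{ij}n_{i}n_{j}$ and converting $J\,d\xi_{1}d\xi_{2}$ back to $d^{2}x$ to return to an integral over $\partial\Omega$. The delicate point, and the step I expect to demand the most care, is the derivation of the surface stretching rate: one differentiates the oriented area element rather than a volume element, so the pure divergence $\partial_{i}u_{i}$ of the Reynolds theorem acquires the normal–normal correction $\epsilon_{ij}n_{i}n_{j}$. Verifying the cofactor identity and tracking that only the symmetric (strain) part of the velocity gradient survives is where an error would most easily creep in.
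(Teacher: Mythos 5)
Your proof is correct, but it follows a genuinely different route from the paper. The paper never parametrizes the surface: it writes $f=\left(fn_{i}\right)n_{i}$ using $n_{i}n_{i}=1$, converts the surface integral to a volume integral with Gauss's theorem, applies the ordinary Reynolds \emph{volume} transport theorem there, and then uses Gauss's theorem a second time to land back on the surface, where the term $f\left(\partial_{t}+u_{j}\partial_{j}\right)\left(\tfrac{1}{2}n_{i}n_{i}\right)$ dies because $n_{i}n_{i}$ is constant. Your argument instead differentiates the area element directly in material coordinates, with the infinitesimal cofactor (Nanson) identity
\begin{equation}
\left(\mathbf{L}\vec{a}\right)\times\vec{b}+\vec{a}\times\left(\mathbf{L}\vec{b}\right)=\left(\left(\mathrm{tr}\,\mathbf{L}\right)\mathbf{I}-\mathbf{L}^{\mathsf{T}}\right)\left(\vec{a}\times\vec{b}\right)
\end{equation}
doing the work that the double application of Gauss's theorem does in the paper; this is the classical Gurtin-style derivation of the surface transport theorem, and your identification $n_{i}n_{j}\partial_{i}u_{j}=\epsilon_{ij}n_{i}n_{j}$ via symmetry of $n_{i}n_{j}$ is exactly right. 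The trade-offs are worth noting. Your approach is self-contained, works for \emph{any} material surface (open or closed), and keeps all quantities defined strictly on the surface; the paper's approach, by contrast, only makes sense for a closed surface $\partial\Omega$ bounding a region, and it silently requires extending the normal field $n_{i}$ to a smooth vector field on all of $\Omega$ (otherwise $\partial_{i}\left(fn_{i}\right)$ in the volume integral is undefined), together with compatibility of that extension with the flow --- a nontrivial gap your derivation simply never encounters. What the paper's route buys is economy: it reuses the already-cited Reynolds volume theorem and elementary vector calculus, avoiding the Lagrangian machinery and the cofactor identity, which, as you say yourself, is the one step in your argument where an error could creep in.
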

\begin{proof}
Since the vector normal to the surface is unitary and the surface closed, we can use Gauss theorem
\begin{eqnarray}
\frac{d}{dt}\int_{\partial\Omega\left(t\right)}f d^{2}x=\frac{d}{dt}\int_{\partial\Omega\left(t\right)}n_{i} f n_{i}d^{2}x\nonumber\\
=\frac{d}{dt}\int_{\Omega\left(t\right)}\partial_{i}\left(f n_{i}\right)d^{3}x.
\label{eq theorem 1 002} 
\end{eqnarray}
Then, we can apply Reynolds transport theorem to the volume integral,  
\begin{equation}
\frac{d}{dt}\int_{\partial\Omega\left(t\right)}f d^{2}x=\int_{\Omega}\left[\partial_{t}\partial_{i}\left(f n_{i}\right)+u_{j}\partial_{j}\partial_{i}\left(f n_{i}\right)+\partial_{j}u_{j}\partial_{i}\left(f n_{i}\right)\right]d^{3}x.\label{eq theorem 1 003} 
\end{equation}
Using the chain rule twice in the second term in the integral of right hand side and taking in to account that time and space derivatives commutes in the first term of right hand side,  
\begin{eqnarray}
\frac{d}{dt}\int_{\partial\Omega\left(t\right)}f d^{2}x=\int_{\Omega}\left\{\partial_{i}\left[\left(\partial_{t}+u_{j}\partial_{j}\right)\left(f n_{i}\right)\right]
-\partial_{j}\left(fn_{i}\partial_{i}u_{j}\right)\right\}d^{3}x+\nonumber\\
+\int_{\Omega}\partial_{i}\left(f n_{i}\partial_{j}u_{j}\right)d^{3}x.\label{eq theorem 1 004} 
\end{eqnarray}
So, we can use again Gauss theorem,
\begin{eqnarray}
\frac{d}{dt}\int_{\partial\Omega\left(t\right)}fd^{2}x=\int_{\partial\Omega}f\left(\partial_{t}+u_{j}\partial_{j}\right)\left(\frac{1}{2}n_{i}n_{i}\right)d^{2}x\nonumber\\ 
+\int_{\partial\Omega}\left\{n_{i}n_{i}\left[\left(\partial_{t}+u_{j}\partial_{j}\right)f+\partial_{j}u_{j}f\right]
-f\partial_{i}u_{j}n_{j}n_{i}\right\}d^{2}x.
\label{eq theorem 1 005} 
\end{eqnarray}
The first term of right hand side is the derivative of a constant and it vanishes. Then, from
\begin{eqnarray}
\frac{d}{dt}\int_{\partial\Omega\left(t\right)}f d^{2}x=\int_{\partial\Omega}\left[\left(\partial_{t}+u_{i}\partial_{i}\right)f
+\left(\partial_{i}u_{i}-\partial_{i}u_{j}n_{j}n_{i}\right)f\right]d^{2}x,
\label{eq theorem 1 006} 
\end{eqnarray}
the relation (\ref{eq theorem 1 001}) arises and the theorem is proved.
\end{proof}
Equation (\ref{eq theorem 1 001}) is similar to the transport theorem for moving surfaces \cite{RefJ6}-\cite{RefJ7}, which usually is written in terms of both, normal velocity and curvature of the surface. Now that we know the rate of change of the surface integral of a magnitude with time, we would like to know whether the area of the surface grows, diminishes or remains constant with time when the volume does not change. We  knows a particular case yet. One of the properties of the sphere is that it has the least area that encloses a volume. So, the area of the sphere only can increase or be the same few time after. This means that the area is a convex function of time near of the minimum. The next theorem refletcs this situation.
\begin{thm}
Let $\vec{u}\left(t,\vec{x}\right)\in \mathbf{R}^{3}$ be a velocity field with components $u_{r},u_{\theta},u_{\phi}$ in spherical coordinates. Let $\mathbf{S}^{3}\subset\mathbf{R}^{3}$ be a spherical region of the field with boundary $\mathbf{S}^{2}$ and radium $r$. Also, there exists only one region $\Omega\left(t\right)\subset \mathbf{R}^{3}$ for $t \neq t_{0}$ such as $\Omega\left(t\right)\rightarrow \mathbf{S}^{3}$ when $t\rightarrow t_{0}$. For each $\mathbf{S}^{3}$, and every $t$, if the velocity field holds the incompressibility statement, $\vec{\nabla}\cdot\vec{u}=0$, then
\begin{equation}
\int^{\pi}_{0}\int^{2 \pi}_{0}\left[\epsilon^{2}_{rr}-\frac{D}{Dt}\epsilon_{rr} \right]r^{2}\sin \theta d\theta d\phi\geq 0,
\label{eq theorem 2 001} 
\end{equation}
where $\epsilon_{rr}=\partial_{r}u_{r}$ ( for stain tensor in spherical coordinates see \cite{RefB8}).
\end{thm}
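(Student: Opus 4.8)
The plan is to read the claimed inequality as the assertion that the enclosed surface area, regarded as a function of time, is convex at the instant $t_0$ when the region is spherical, and to obtain that convexity from the isoperimetric inequality. Write $A(t)$ for the area of $\partial\Omega(t)$. Because the field is incompressible, $\vec{\nabla}\cdot\vec{u}=0$, the volume of $\Omega(t)$ is constant in $t$; and among all regions of a fixed volume the ball has the least boundary area, so $A(t)\ge A(t_0)$ for $t$ near $t_0$, i.e. $A$ has a local minimum at $t_0$. Assuming $A$ is twice differentiable (which should follow from the smoothness of $\vec{u}$ and the hypothesised regular family $\Omega(t)\to\mathbf{S}^{3}$), the second-order condition for a minimum gives $A'(t_0)=0$ and, crucially, $A''(t_0)\ge 0$. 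The theorem will then follow once $A''(t_0)$ is identified with the integral in (\ref{eq theorem 2 001}).

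First I would compute $A'(t)$ by applying Theorem~1 with $f\equiv 1$. The terms $\partial_t f$ and $u_i\partial_i f$ vanish, and incompressibility kills $\partial_i u_i$, leaving
\[
A'(t)=\frac{d}{dt}\int_{\partial\Omega(t)}d^{2}x=-\int_{\partial\Omega(t)}\epsilon_{ij}n_i n_j\,d^{2}x.
\]
This exhibits the normal--normal strain $\epsilon_{nn}:=\epsilon_{ij}n_i n_j$ as the pointwise rate of area change, and at $t_0$, where $\partial\Omega=\mathbf{S}^{2}$ and the outward normal is radial, $n_i=\hat r_i$ and $\epsilon_{nn}=\epsilon_{rr}=\partial_r u_r$.

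Next I would differentiate once more. Applying Theorem~1 a second time with $f=\epsilon_{nn}$, and again using $\partial_i u_i=0$, the surface term $(\partial_i u_i-\epsilon_{ij}n_i n_j)f$ contributes $-\epsilon_{nn}^{2}$ while $\partial_t f+u_i\partial_i f$ assembles the material derivative, so that
\[
A''(t)=\int_{\partial\Omega(t)}\Big[\epsilon_{nn}^{2}-\frac{D}{Dt}\epsilon_{nn}\Big]\,d^{2}x,\qquad \frac{D}{Dt}=\partial_t+u_i\partial_i.
\]
Evaluating at $t_0$, substituting $\epsilon_{nn}=\epsilon_{rr}$ and writing the spherical area element as $r^{2}\sin\theta\,d\theta\,d\phi$, the non-negativity $A''(t_0)\ge 0$ becomes exactly (\ref{eq theorem 2 001}).

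The step I expect to be the main obstacle is this second differentiation, precisely because the unit normal $n_i$ is itself time dependent: it equals $\hat r_i$ only at the instant $t_0$, whereas for $t\neq t_0$ the deformed surface $\partial\Omega(t)$ is non-spherical and $\epsilon_{nn}\neq\epsilon_{rr}$. A careful treatment must track $\tfrac{D}{Dt}(n_i n_j)$ and show that, when evaluated on the sphere at $t_0$, the contribution of the moving normal reduces the material derivative of $\epsilon_{nn}$ to that of the fixed radial component $\epsilon_{rr}$; here one uses that $\tfrac{D}{Dt}(n_i n_i)=0$ forces $\tfrac{D n_i}{Dt}$ to be tangential and that $n_k n_l\partial_k u_l=\epsilon_{nn}$. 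A secondary point to make rigorous is the passage from the isoperimetric inequality to the $C^{2}$ convexity statement $A''(t_0)\ge0$, which leans on the hypothesis that a unique regular family $\Omega(t)$ limits onto $\mathbf{S}^{3}$, so that $A(t)$ is genuinely smooth through $t_0$ and the vanishing of $A'(t_0)$ is consistent with $\int_{\mathbf{S}^{2}}\epsilon_{rr}\,d^{2}x=0$.
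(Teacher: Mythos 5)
Your proposal follows essentially the same route as the paper's own proof: the isoperimetric inequality combined with incompressibility (equal volumes) to show the area has a local minimum at $t_{0}$, then two applications of Theorem 1 (first with $f=1$, then with $f=\epsilon_{ij}n_{i}n_{j}$), and finally the fact that the normal to $\mathbf{S}^{2}$ is purely radial to identify $A''(t_{0})$ with the integral in (\ref{eq theorem 2 001}). The subtlety you flag about the time-dependence of the normal in the second differentiation is real, but the paper's proof has exactly the same issue --- it applies Theorem 1 directly to $f=-n_{i}n_{j}\partial_{i}u_{j}$ without further justification --- so your attempt is neither weaker nor stronger than the published argument on this point.
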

\begin{proof}
 Taking into account the very well known isoperimetric inequality for three dimensions \cite{RefB9}-\cite{RefJ10}, we have
\begin{equation}
\int_{\partial\Omega\left(t\right)}d^{2}x \geq 3 \left(\frac{4}{3}\pi \right)^{\frac{1}{3}}\left[\int_{\Omega\left(t\right)}d^{3}x\right]^{\frac{2}{3}},
\label{eq theorem 2 002}
\end{equation}
where the equality holds for the sphere $\mathbf{S}^{3}$.
We substract the area of $\mathbf{S}^{2}$ on both sides,
\begin{eqnarray}
\int_{\partial\Omega\left(t\right)}d^{2}x-\int_{\mathbf{S}^{3}}d^{2}x \geq 3 \left(\frac{4}{3}\pi \right)^{\frac{1}{3}}\left[\int_{\Omega\left(t\right)}d^{3}x\right]^{\frac{2}{3}}-\int_{\mathbf{S}^{2}}d^{2}x\nonumber\\
\geq
3 \left(\frac{4}{3}\pi \right)^{\frac{1}{3}}\left\{\left[\int_{\Omega\left(t\right)}d^{3}x\right]^{\frac{2}{3}}-\left[\int_{\mathbf{S}^{3}}d^{3}x\right]^{\frac{2}{3}}\right\}.
\label{eq theorem 2 003}
\end{eqnarray}
Due to the incompressibility of the fluid, $\mathbf{S}^{3}$ and $\Omega$ have the same volume. The right hand side of (\ref{eq theorem 2 003}) then vanishes
\begin{equation}
\int_{\partial\Omega\left(t\right)}d^{2}x-\int_{\mathbf{S}^{2}}d^{2}x \geq 0.
\label{eq theorem 2 004}
\end{equation}
In addition, the area time derivative is given by (\ref{eq theorem 1 001}), with $f= 1$ and $\partial_{i}u_{i}= 0$,
\begin{eqnarray}
 \left[\frac{d}{dt}\int_{\partial\Omega\left(t\right)}d^2 x\right]\left(t_0\right)=-\int^{\pi}_{0}\int^{2\pi}_{0}\partial_{r}u_{r} r^{2} \sin \theta d\theta d\phi\nonumber\\=-\partial_{r}\left[\int^{\pi}_{0}\int^{2\pi}_{0}u_{r} r^{2} \sin \theta d\theta d\phi\right]+\frac{2}{r}\int^{\pi}_{0}\int^{2\pi}_{0}u_{r} r^{2} \sin \theta d\theta d\phi\nonumber\\=-\partial_{r}\left[\int_{\mathbf{S}^3}\partial_{i}u_{i} d^3x\right]+\frac{2}{r}\int_{\mathbf{S}^3}\partial_{i}u_{i} d^3x=0.
\label{eq theorem 2 005} 
\end{eqnarray}
So the area of a sphere reaches its minimum at time $t=t_0$ in a incompressible velocity field. This property together with (\ref{eq theorem 2 004}) means that the area is a local convex function of time in a range close to $t_{0}$.
Therefore, the second time derivative of this function at $t_{0}$ holds
\begin{equation}
 \left[\frac{d^{2}}{dt^{2}}\int_{\partial\Omega\left(t\right)}d^2 x\right]\left(t_0\right) \geq 0.
\label{eq theorem 2 006}
\end{equation}
The second time derivative of the area can be computed applying (\ref{eq theorem 1 001}) twice
\begin{eqnarray}
\left[\frac{d^{2}}{dt^{2}}\int_{\partial\Omega\left(t\right)}d^2 x\right]\left(t_0\right)=\left[\frac{d}{dt}\int_{\partial\Omega\left(t\right)}\left(-n_{i}n_{j}\partial_{i}u_{j}\right)d^2 x\right]\left(t_0\right) \nonumber\\
=\left[\int_{\partial\Omega\left(t\right)}\left\{\left(n_{i}n_{j}\partial_{i}u_{j}\right)^{2}-\frac{D}{Dt}\left(n_{i}n_{j}\partial_{i}u_{j}\right)\right\}d^2 x\right]\left(t_0\right)  \nonumber\\
=\int_{\mathbf{S}^{2}}\left\{\left(\partial_{r}u_{r}\right)^2-\frac{D}{Dt}\left(\partial_{r}u_{r}\right)\right\}d^2x,
\label{eq theorem 2 007}
\end{eqnarray}
where in the last line we have used that the normal vector to the surface of the sphere only has radial component. Taken (\ref{eq theorem 2 006}) together with (\ref{eq theorem 2 007}), we can find (\ref{eq theorem 2 001}) at time $t=t_0$. But the spherical surface is independent of time, since the time dependency is in the integrand of the last line of (\ref{eq theorem 2 007}). This is, at every time, for every spherical surface, there exist a volume, which is a function of time, that converges to the sphere. Then (\ref{eq theorem 2 001}) is held at every instant of time.
\end{proof}
 Given that we have a surface integral, it does not matter what is the velocity distribution inside the sphere but just that velocity distribution on its surface. Therefore, this theorem asserts that if there exist at least a sphere in the domain of the incompressible velocity field that violates (\ref{eq theorem 2 001}), evolution with time is forbidden for that velocity field. The next lemma applies this theorem to incompressible Navier-Stokes fluids. 
\begin{lem}
Let $\vec{u}\left(t,\vec{x}\right)\in \mathbf{R}^{3}$ be an incompressible velocity field, $\vec{\nabla}\cdot \vec{u}=0$, which evolves in time according to the Navier-Stokes equations 
\begin{equation}
\partial_{t}\vec{u}+\vec{u}\cdot\vec{\nabla}\vec{u}=\mu \triangle\vec{u}-\vec{\nabla} p.
\label{eq lemma 1 001} 
\end{equation}
Here, $p$ is the pressure, the density is $\rho=1$ and $\mu$ is the viscosity. Velocity components in spherical coordinates are denoted by $u_{r},u_{\theta},u_{\phi}$ ($\theta$ and $\phi$ are polar and azimuth angles, respectively). Then, at every time $t$, for every spherical region of the field $\mathbf{S}^{3}\subset\mathbf{R}^{3}$ with boundary $\mathbf{S}^{2}$ and radius $r$, we have
\begin{equation}
\int_{\mathbf{S}^{2}}\left\{\partial^{2}_{r}p(t,\vec{x})-F(t,\vec{x})-\mu G(t,\vec{x})+\left(\partial_{r}u_{r}(t,\vec{x})\right)^{2}\right\}d^{2} x \geq 0
\label{eq lemma 1 002} 
\end{equation}
where
\begin{eqnarray}
F(t,r,\theta,\phi)&=&\partial_{r}\left(\frac{u^{2}_{\theta}+u^{2}_{\phi}}{r}\right)-\left(\partial_{r}u_{r}\right)^{2}-\partial_{r}\left(\frac{u_{\theta}}{r}\right)\partial_{\theta}u_{r}\nonumber\\&&
-\partial_{r}\left(\frac{u_{\phi}}{r}\right)\frac{\partial_{\phi}u_{r}}{\sin \theta},\label{eq lemma 1 003}\\
G(t,r,\theta,\phi)&=&\partial^{3}_{r}u_{r}+2\partial^{2}_{r}\left(\frac{u_{r}}{r}\right) +\frac{1}{r^{2}}\left(\partial^2_{\theta}\partial_{r}u_{r}+\cot \theta\partial_{\theta}\partial_{r}u_{r}\right)\nonumber\\&&-\frac{2}{r^3}\left(\partial^2_{\theta}u_{r}+\cot \theta\partial_{\theta}u_{r}\right).
\label{eq lemma 1 004}
\end{eqnarray}
\end{lem}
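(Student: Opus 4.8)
The plan is to reduce the whole statement to a single pointwise identity and then invoke the previous theorem. Indeed, the inequality (\ref{eq theorem 2 001}) already reads
\[
\int_{\mathbf{S}^{2}}\left[\left(\partial_{r}u_{r}\right)^{2}-\frac{D}{Dt}\left(\partial_{r}u_{r}\right)\right]d^{2}x\geq 0 ,
\]
so it suffices to prove that, on every sphere,
\[
\frac{D}{Dt}\left(\partial_{r}u_{r}\right)=-\partial^{2}_{r}p+F+\mu G ,
\]
with $F$ and $G$ as in (\ref{eq lemma 1 003})--(\ref{eq lemma 1 004}); substituting this into the displayed integral gives (\ref{eq lemma 1 002}) at once.

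First I would write the radial component of (\ref{eq lemma 1 001}) in spherical coordinates, keeping in mind that the radial part of the material acceleration is not the scalar material derivative of $u_{r}$ but carries the geometric term $-(u^{2}_{\theta}+u^{2}_{\phi})/r$, so that
\[
\frac{Du_{r}}{Dt}-\frac{u^{2}_{\theta}+u^{2}_{\phi}}{r}=\mu\left(\triangle\vec{u}\right)_{r}-\partial_{r}p ,
\]
where $\frac{D}{Dt}=\partial_{t}+u_{r}\partial_{r}+\frac{u_{\theta}}{r}\partial_{\theta}+\frac{u_{\phi}}{r\sin\theta}\partial_{\phi}$ acts here as a scalar operator and $\left(\triangle\vec{u}\right)_{r}$ denotes the radial component of the vector Laplacian.

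Next I would apply $\partial_{r}$ to this equation. The one non-routine point is that $\partial_{r}$ does not commute with $\frac{D}{Dt}$; a direct computation, in which the $\partial_{t}$ piece and the second order advective pieces cancel while only the terms with $\partial_{r}$ falling on the coefficients survive, yields
\[
\left[\partial_{r},\tfrac{D}{Dt}\right]u_{r}=\left(\partial_{r}u_{r}\right)^{2}+\partial_{r}\!\left(\frac{u_{\theta}}{r}\right)\partial_{\theta}u_{r}+\partial_{r}\!\left(\frac{u_{\phi}}{r}\right)\frac{\partial_{\phi}u_{r}}{\sin\theta}.
\]
Collecting this commutator together with the radial derivative of the geometric term $\partial_{r}\big((u^{2}_{\theta}+u^{2}_{\phi})/r\big)$ reproduces precisely $-F$, so that the differentiated left hand side becomes $\frac{D}{Dt}(\partial_{r}u_{r})-F$, while on the right hand side $-\partial^{2}_{r}p$ appears directly and leaves the viscous contribution $\mu\,\partial_{r}\left(\triangle\vec{u}\right)_{r}$.

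I expect the viscous term to be the main obstacle: I must verify that $\partial_{r}\left(\triangle\vec{u}\right)_{r}=G$. This forces me to insert the somewhat cumbersome radial component of the spherical vector Laplacian and to use incompressibility $\vec{\nabla}\cdot\vec{u}=0$ to eliminate the $u_{\theta},u_{\phi}$ cross terms in favour of radial derivatives of $u_{r}$. After this simplification the angular part $\frac{1}{r^{2}}\left(\partial^{2}_{\theta}u_{r}+\cot\theta\,\partial_{\theta}u_{r}\right)$ differentiates in $r$ to give exactly the last two terms of (\ref{eq lemma 1 004}), and the purely radial part should collapse to $\partial^{3}_{r}u_{r}+2\partial^{2}_{r}(u_{r}/r)$. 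Careful bookkeeping of the Laplacian together with the constraint is the delicate part; once $G$ is identified, everything reduces to substituting this identity back into (\ref{eq theorem 2 001}), and since the spherical surface element is $d^{2}x=r^{2}\sin\theta\,d\theta\,d\phi$ the claimed inequality (\ref{eq lemma 1 002}) follows.
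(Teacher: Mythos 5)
Your proposal is correct and follows essentially the same route as the paper: the paper likewise writes the radial Navier--Stokes equation in spherical coordinates, applies $\partial_{r}$, regroups the advective terms into $\frac{D}{Dt}\left(\partial_{r}u_{r}\right)$ (your commutator computation is exactly this regrouping, which the paper compresses into the phrase ``use the identity to group terms''), arrives at $\frac{D}{Dt}\left(\partial_{r}u_{r}\right)=-\partial^{2}_{r}p+F+\mu G$, and substitutes this into the inequality of the preceding theorem. The only difference is presentational: the step you flag as delicate --- verifying $\partial_{r}\left(\triangle\vec{u}\right)_{r}=G$ from the general vector Laplacian plus incompressibility --- is sidestepped in the paper, which quotes the radial equation from its reference (Durst) already in a reduced form whose $r$-derivative is literally $G$.
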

\begin{proof}
The radial direction of the equation (\ref{eq lemma 1 001}) is given by 
\begin{eqnarray}
\partial_{t}u_{r}+u_{r}\partial_{r}u_{r}+\frac{u_{\theta}}{r}\partial_{\theta}u_{r}+\frac{u_{\phi}}{r \sin \theta}\partial_{\phi}u_{r}-\frac{u^2_\theta+u^2_\phi}{r}= \nonumber\\-\partial_{r} p+ \mu \left[\partial^{2}_{r}u_{r}+2\partial_{r}\left(\frac{u_{r}}{r}\right)+\frac{1}{r^2}\partial^{2}_{\theta}u_{r}+\frac{\cot \theta}{r^{2}}\partial_{\theta}u_{r}\right]
\label{eq lemma 1 005}
\end{eqnarray}
(see, e.g., \cite{RefB11})Next we can take the partial derivative of this relation with respect to $r$,  and then we use the identity
\begin{equation}
\frac{DQ}{Dt}=\partial_{t}Q+u_{r}\partial_{r}Q+\frac{u_{\theta}}{r}\partial_{\theta}Q+\frac{u_{\phi}}{r \sin \theta}\partial_{\phi}Q
\label{eq lemma 1 006}
\end{equation}
(where $Q$ is a scalar magnitude) to group terms, obtaining
\begin{equation}
\frac{D}{Dt}\left(\partial_{r}u_{r}\right) =-\partial^{2}_{r}p+F+\mu G,
\label{eq lemma 1 007}
\end{equation}
Substitution of this relation on (\ref{eq theorem 2 001}) gives rise to (\ref{eq lemma 1 002}).
\end{proof}
This lemma means that if we find at least a spherical surface for which the incompressible velocity field does not hold (\ref{eq lemma 1 002}), that field can not evolve acording to Navier--Stokes equations. Notice that the lemma is only useful when the inequality is violated. Lets see it with three examples.
\begin{ex}[1]
At time $t_0$, let a velocity field be given by
\begin{equation}
\left\{
\begin{array}{l l}
u_{r}=0 \\
u_{\theta}=0 \\
u_{\phi}=r \sin \theta \\ 
\end{array}
\right.
\label{eq examp 1 001} 
\end{equation}
(in spherical coordinates) inside a bigger sphere of ratio $R$.
The fluid of this velocity field spins around the $z$ axis and is divergent-free. We would like to confirm that  (\ref{eq lemma 1 002}) is correct. The computation of (\ref{eq lemma 1 003}) and (\ref{eq lemma 1 004})  to this velocity  field gives
\begin{eqnarray}
F\left(r,\theta,\phi \right)&=&\sin ^2 \theta 
\label{eq examp 1 002} \\
G\left(r,\theta,\phi \right) &=&0
\label{eq examp 1 003} 
\end{eqnarray}
Computation of double radium derivative of pressure is more difficult. We can work out the pressure, as usual, by solving the Poisson equation obtained when we take the divergence of incompressible Navier-Stokes equations  (\ref{eq lemma 1 001}). So, this non-local function of spatial derivatives of velocity is
\begin{equation}
p\left(t,\vec{x}'\right)=\frac{1}{4\pi}\int_{\mathbf{R}^3}\frac{\vec{\nabla} \cdot \left(\vec{u}\cdot \vec{\nabla} \vec{u}\right)}{\left|\vec{x}-\vec{x}'\right| } d^3 x.
\label{eq examp 1 004} 
\end{equation} 
In our case, the corresponding derivations and integration in the sphere of radium $R$ gives us
\begin{equation}
p\left(r,\theta,\phi\right)=\frac{1}{3}r^2-R^2,
\label{eq examp 1 005} 
\end{equation}
being $R \geq r$.
Then, the double time derivative of the surface area that converges to a sphere of radium $r$ in this velocity field is
\begin{eqnarray}
\int^{\pi}_{0}\int^{2 \pi}_{0}\left\{\partial^{2}_{r}p-F-\mu G+\left(\partial_{r}u_{r}\right)^{2}\right\} r^{2}\sin \theta d\theta d\phi \nonumber\\
=\int^{\pi}_{0}\int^{2 \pi}_{0}\left\{\frac{2}{3}-\sin^2 \theta \right\} r^{2}\sin \theta d\theta d\phi=0.
\label{eq examp 1 006} 
\end{eqnarray}
This time, the result does agree with the inequality (\ref{eq lemma 1 002}). But still remains regions of the $\mathbf{R}^3$ where we can look for violation of the inequality.
Now, the system of reference is shifted a distance $L$ from the $z$ axis along $y$ axis (see Fig. \ref{fig:1} ), instead of be on it. With the identities given by
\begin{equation}
\left\{\begin{array}{rcl}
&&r\cos \theta = r' \cos \theta' \\
&&r\sin \theta \cos \phi = r' \sin \theta' \cos \phi'  \\
&&L=r' \sin \theta' \sin \phi' - r \sin \theta \sin \phi
\end{array}
\right.
\label{eq examp 1 007} 
\end{equation}
\begin{figure}
\centering
\resizebox{0.75\textwidth}{!}{
  \includegraphics{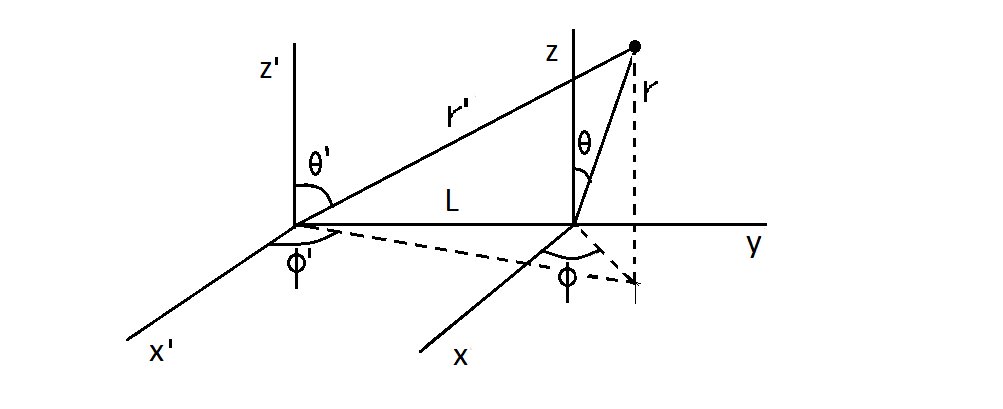}
}
\caption{Coordinates origin shifted a distance $L$ along $y$ axis.}
\label{fig:1}    
\end{figure}
we can change
\begin{equation}
\left\{\begin{array}{rcl}
&& u_{r'}=0 \\
&& u_{\theta'}=0  \\
&& u_{\phi'}=r' \sin \theta' \\
&& p= \frac{1}{3}r'^{2}-R^{2}
\end{array}
\right.
\label{eq examp 1 008}
\end{equation}
by
\begin{equation}
\left\{\begin{array}{rcl}
&& u_{r}=L \sin \theta \cos \phi \\
&& u_{\theta}=L \cos \theta \cos \phi  \\
&& u_{\phi}=r \sin \theta - L \sin \phi \\
&& p= \frac{1}{3}\left(r^{2}+L^{2}+2L r \sin \theta \sin \phi\right)-R^{2}
\end{array}
\right.
\label{eq examp 1 009}
\end{equation}
with $R \geq L+r$. Of course, this velocity field still has $\vec{\nabla}\cdot\vec{u}=0$. Repeating again the steps like before, we find
\begin{eqnarray}
F\left(r,\theta,\phi \right)&=&\sin^{2} \theta+ \frac{L}{r}\sin \theta \sin \phi-\frac{L^2}{r^2} \left(\sin^{2} \phi +\cos \theta\cos \phi \sin \phi\right) 
\label{eq examp 1 010}  \\
G\left(r,\theta,\phi \right) &=& \frac{L\cos \phi}{r^{3} \sin \theta}\left( 8 \sin^{2} \theta - \cos^{2} \theta\right)
\label{eq examp 1 011} 
\end{eqnarray}
and hence,
\begin{equation}
\int^{\pi}_{0}\int^{2 \pi}_{0}\left\{\partial^{2}_{r}p-F-\mu G+\left(\partial_{r}u_{r}\right)^{2}\right\} r^{2}\sin \theta d\theta d\phi= 2 \pi L^{2}.
\label{eq examp 1 012} 
\end{equation}
This result also agrees with the inequality (\ref{eq lemma 1 002}). Moreover, it is independent of the radium $R$. So, this inequality can be extrapolated to $\mathbf{R}^3$ doing $R \to \infty$ or $L \to \infty$ and using revolution symmetry around $z'$ axis.
\end{ex}
\begin{ex}[2]
In this example, we will see that the divergent-free velocity field (see Fig. \ref{fig:2}) given in spherical coordinates, at a time $t_0$ ,by
\begin{equation}
\left\{\begin{array}{rcl}
&& u_{r}=0 \\
&& u_{\theta}=0  \\
&& u_{\phi}=r^{k} \sin \theta,
\label{eq examp 2 001} 
\end{array}
\right.
\end{equation}
with $2 \leq k \in \mathbf{N}$, does not hold the inequality (\ref{eq lemma 1 002}) for every sphere that is inside a bigger sphere of radium $R$. As before, first we compute $F$, $G$ from velocity and its derivatives, and then, $p$ from them and from the integral over the sphere of radium $R$ (being $R\geq r$). So, we obtain
\begin{eqnarray}
F\left(r,\theta,\phi \right)&=&\left(2k-1\right)r^{2k-2}\sin^{2} \theta \label{eq examp 2 002}\\
G\left(r,\theta,\phi \right) &=& 0 \label{eq examp 2 003}\\
p\left(r,\theta,\phi \right) &=&- \sum^{\infty}_{n \in\mathbf{N}^0\backslash\left\{2k\right\}}\left(\frac{R^{2k-n}}{r^{2k-n}}-\frac{2n+1}{2k+n+1}\right)\frac{r^{2k}}{2k-n}\nonumber\\
&&\int^{\pi}_{0}\left[\left(k-1\right)\sin^2 \theta' +1\right]\sin \theta' P_{n}\left(\cos \left(\theta-\theta'\right)\right)d \theta' \nonumber\\
&&-\left(\frac{1}{2k+1}-\ln\frac{R}{r}\right)r^{2k}\nonumber\\
&&\int^{\pi}_{0}\left[\left(k-1\right)\sin^2 \theta' +1\right]\sin \theta' P_{2k}\left(\cos \left(\theta-\theta'\right)\right)d \theta' \nonumber\\
\label{eq examp 2 003} 
\end{eqnarray}
(where $\mathbf{N}^0\backslash\left\{2k\right\}=0,1,2...2k-1,2k+1,...$ and $P_n\left(x\right)$ are Legrendre polimominals) and hence,
\begin{eqnarray}
\int^{\pi}_{0}\int^{2 \pi}_{0}\left\{\partial^{2}_{r}p-F-\mu G+\left(\partial_{r}u_{r}\right)^{2}\right\} r^{2}\sin \theta d\theta d\phi= \nonumber\\
-2\pi r^{2k}\sum^{\infty}_{n \in\mathbf{N}^0\backslash\left\{2k\right\}}\left[\frac{2k\left(2k-1\right)}{2k+n+1}-\frac{2k\left(2k-1\right)}{2k-n}+\frac{n\left(n-1\right)}{2k-n}\frac{R^{2k-n}}{r^{2k-n}}\right]A_n\nonumber\\
-2\pi r^{2k}\left[\frac{2k\left(4k-1\right)-1}{4k+1}-2k\left(2k-1\right)\ln\frac{R}{r}\right]A_{2k}-\left(2k-1\right)\frac{8}{3}\pi r^{2k}\nonumber\\
\label{eq examp 2 004}
\end{eqnarray}
where
\begin{equation}
A_n=\int^{\pi}_{0}\int^{\pi}_{0}\left(\sin^2 \theta +1\right)\sin \theta \sin \theta' P_n \left(\cos \left( \theta - \theta'\right)\right) d \theta d \theta'
\label{eq examp 2 005}
\end{equation}
Notice that, since $-2\leq A_n \leq 2$ and $r \leq R$, the summatory converges. Moreover, we have used $R$ as a parameter to compute the pressure and we can make it very large. When $R \gg r$, we can approach (\ref{eq examp 2 004}) by
\begin{equation}
-2\pi r^{2k}\frac{1}{k-1}\frac{R^{2k-2}}{r^{2k-2}}A_2 + O\left(\frac{R^{2k-3}}{r^{2k-3}}\right).
\label{eq examp 2 006}
\end{equation}
However, since $A_2=12/5$, it is impossible that it holds the inequality
\begin{equation}
-2\pi r^{2k}\frac{1}{k-1}\frac{R^{2k-2}}{r^{2k-2}}A_2 + O\left(\frac{R^{2k-3}}{r^{2k-3}}\right) \geq 0.
\label{eq examp 2 007}
\end{equation}
So we conclude that surprisingly the velocity field (\ref{eq examp 2 001}) can not evolve according to incompressible Navier-Stokes equations.
\begin{figure}
\centering
\resizebox{0.5\textwidth}{!}{
  \includegraphics{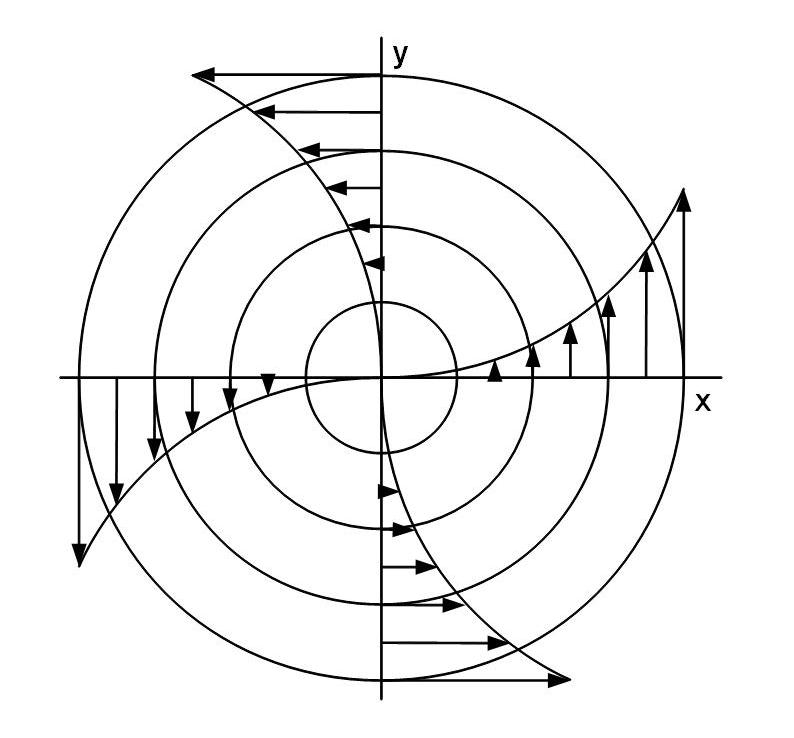}
}
\caption{Velocity profile of (\ref{eq examp 2 001}) in the x-y plane.}
\label{fig:2}    
\end{figure}
\end{ex}
\begin{ex}[3]
In this last example, we will see that the divergent-free velocity field (see Fig. \ref{fig:3}) given, at a time $t_0$ ,by
\begin{equation}
\left\{\begin{array}{rcl}
&& u_{r}=\left(R^{2}-r^{2}\sin^{2}\theta\right)\cos\theta \\
&& u_{\theta}=-\left(R^{2}-r^{2}\sin^{2}\theta\right)\sin\theta   \\
&& u_{\phi}=0,
\label{eq examp 3 001} 
\end{array}
\right.
\end{equation}
with $0 \leq r < \infty$ and $0\leq \theta\leq \pi$, does not hold the inequality (\ref{eq lemma 1 002}) for every sphere.  Proceeding as before, we again compute $F$, $G$ and $p$, this one worked out throught an integral over all three dimensional space. So, we obtain
\begin{eqnarray}
F\left(r,\theta,\phi \right)&=&\left(r^{2}\sin^{2}\theta-R^{2}\right)\left(r^{2}\sin^{2}\theta-\cos^{2}\theta\right)\sin^{2} \theta, \label{eq examp 3 002}\\
G\left(r,\theta,\phi \right) &=& \left(\frac{10 R ^{2}}{r^{2}}-2\sin^{2}\theta\right)\frac{\cos\theta}{r} ,\label{eq examp 3 003}\\
p\left(r,\theta,\phi \right) &=&0, \nonumber\\
\label{eq examp 3 003} 
\end{eqnarray}
 and hence,
\begin{eqnarray}
\int^{\pi}_{0}\int^{2 \pi}_{0}\left\{\partial^{2}_{r}p-F-\mu G+\left(\partial_{r}u_{r}\right)^{2}\right\} r^{2}\sin \theta d\theta d\phi= \frac{16 \pi r^{2}}{5}\left(R^{2}-\frac{4}{7} r^{2}\right)\nonumber\\
\label{eq examp 3 004}
\end{eqnarray}
However, the inequality (\ref{eq lemma 1 002}) does not hold when the radius of the probe sphere is
\begin{equation}
r > \frac{\sqrt{7}}{2}R.
\label{eq examp 3 007}
\end{equation}
Then,  the velocity field (\ref{eq examp 3 001}) can not evolve according to incompressible Navier-Stokes equations.
\begin{figure}
\centering
\resizebox{0.5\textwidth}{!}{
  \includegraphics{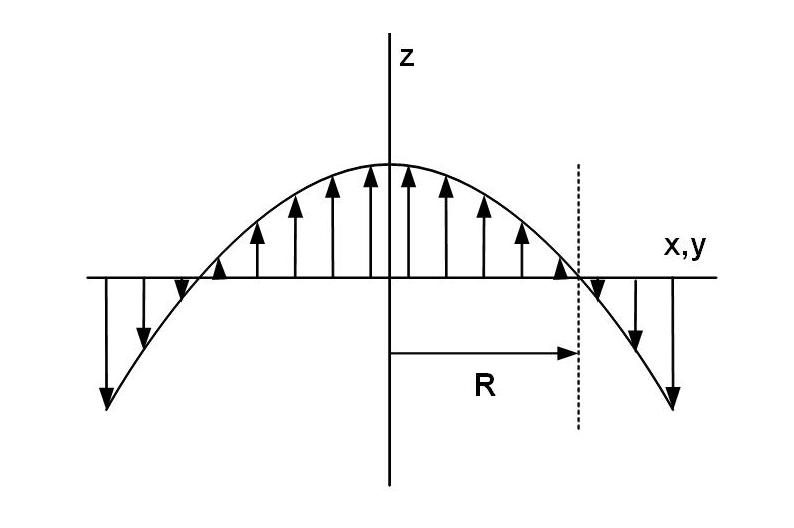}
}
\caption{Velocity profile of (\ref{eq examp 3 001}) in a plane that contains the z axis.}
\label{fig:3}    
\end{figure}
\end{ex}
\section{Conclusion}
We have shown that given an incompressible velocity field at a initial time, we can test whether its time evolution is forbidden by a criterion. It is related with the non-negativeness of the double time derivative of the area of a volume that becomes a sphere at that instant. Of course, if velocity agrees the inequality at a give time it also agrees that the reminder of the time, because in other case, the field had not evolved to reach that instant. In particular we have worked the inequality out to a Navier--Stokes fluids. We also have found two particular incompressible velocity fields that can not evolve acording to Navier--Stokes equations.

%
%
%


\end{document}